%% file: main.tex
\documentclass[ aps,prl, reprint,superscriptaddress, floatfix,  
nofootinbib,
nobibnotes,
amsmath,amssymb,
]{revtex4-2}

\input{preamble.tex}

\begin{document}
\preprint{APS/123-QED}
\title{Practical Tomography of Multi-Time Processes}

\author{Abhinash Kumar Roy}
\email{abhinash.roy@students.mq.edu.au}

\affiliation{Department of Physical and Mathematical Sciences, Macquarie University, Sydney NSW, Australia.}
\author{Varun Srivastava}

\affiliation{Department of Physical and Mathematical Sciences, Macquarie University, Sydney NSW, Australia.}

\author{Christina Giarmatzi}

\affiliation{Department of Physical and Mathematical Sciences, Macquarie University, Sydney NSW, Australia.}


\author{Alexei Gilchrist}
\email{alexei.gilchrist@mq.edu.au}
\affiliation{Department of Physical and Mathematical Sciences, Macquarie University, Sydney NSW, Australia.}

\begin{abstract}

Characterising multi-time quantum processes is essential for analysing temporally correlated noise and for designing effective control and mitigation strategies. A complete operational description through multi-time process tomography requires an informationally complete set of probes, which necessarily includes non-deterministic intermediate operations. On present-day quantum devices, such operations are commonly implemented using mid-circuit measurements and reset, which are technologically limited and can introduce noise and overhead in terms of ancilla requirement. In this work, we study the minimal ancillary dimension required for complete characterisation of multi-time processes. We show that sequential interactions with a single  qubit ancilla can generate an informationally complete family of correlated probes for processes of arbitrary length, without requiring mid-circuit measurements or reset. Our result provides a resource-efficient route for complete multi-time process tomography and establishes that one qubit of coherent ancillary memory suffices for full reconstruction of arbitrary multi-time dynamics.
\end{abstract}

\maketitle

\emph{Introduction}: Temporally correlated noise is a major obstacle to the reliable operation of quantum devices \cite{Hashim_PRX_QCVV, modi_ibm, White2020,White2025whatcanunitary}. Unlike Markovian noise, non-Markovian dynamics retain memory of earlier system–environment interactions, so their effects cannot be captured by a sequence of independent channels \cite{Rivas2014,Breuer2016RMP,LI20181, costa2016,pollock_operational_markov,Pollock_pra,Milz2017,taranto2025higherorderquantumoperations}. Such memory effects invalidate the assumptions of standard error correction \cite{chuang00, Kam_2025}, gate characterisation \cite{srivastava2025blindspotsrandomizedbenchmarkingtemporal}, and control optimisation protocols \cite{White2020}, thereby posing a fundamental challenge to scalable quantum computing. A complete operational description of such dynamics is provided by the multi-time process matrix formalism, which generalizes quantum states and channels to sequences of interventions at different times and, in principle, enables the reconstruction of all experimentally accessible temporal correlations \cite{chiribella09b, oreshkov12, pollock_operational_markov, taranto2025higherorderquantumoperations, White_PRX_2025, modi_ibm, costa2016, PhysRevX.10.041049}. 

Access to the full process matrix is important not only for diagnosing memory effects, but also for developing memory-aware control and verification strategies for noisy quantum hardware \cite{srivastava2025blindspotsrandomizedbenchmarkingtemporal, White2020, White_PRX_2025, modi_ibm, Giarmatzi2021witnessingquantum, Giarmatzi2025multitimequantum, Goswami_2025}. However, complete multi-time process tomography remains experimentally demanding for two main reasons. First, the number of required experimental configurations grows rapidly with the number of probing times \cite{Milz_PRA_2018,White2022,santos2025driftresilientmidcircuitmeasurementstate}. Second, informational completeness requires non-deterministic intervention operations, such as measurement-and-repreparation maps, which on current platforms are typically implemented using mid-circuit measurement, feed-forward, and reset \cite{Giarmatzi2025multitimequantum}. Although such capabilities have enabled recent demonstrations of full process tomography, they remain slow, noisy, and technologically restrictive \cite{koh2025readouterrormitigationmidcircuit,Kenneth_Mid_Circuit}. Consequently, most present-day experiments access only restricted process matrices constructed from deterministic operations alone \cite{White2020,White2022,White2025whatcanunitary}, leaving a practical route to complete multi-time characterization on near-term hardware still lacking .

In this work, we study the minimal ancillary resources required for the complete characterisation of a fixed-size multi-time process, specified by the system dimension and the number of intervention times. We demonstrate that a single coherent qubit ancilla is sufficient to realise an informationally complete set of probes for arbitrary multi-time processes, without requiring either mid-circuit measurements or ancilla resets. We prove that sequential interactions between the system and a single qubit ancilla generate a family of correlated operations, constrained to have a matrix-product structure with fixed bond dimension, whose linear span nevertheless coincides with the full operator space needed for complete process reconstruction. This construction has implications beyond tomography. Since the accessible probes span the full multi-time operator space, arbitrary linear functionals of the process, including memory witnesses and other high-rank observables, can be evaluated from experimentally friendly probe statistics. The resulting protocol therefore offers a resource-efficient framework not only for complete multi-time tomography, but more broadly for the characterisation, verification, and control of temporally correlated quantum dynamics.

\emph{Multi-time processes}:
Consider a quantum system that evolves in time, possibly interacting with an environment. To probe this evolution one may intervene at different times by performing operations on the system. The process matrix formalism~\cite{oreshkov12,oreshkov15} provides a general operational framework to describe such multi-time experiments. In this formalism a sequence of laboratories (or \emph{labs}) $A_1,A_2,\ldots,A_N$ are placed along the system's time evolution (see Fig.~\ref{Fig:process_W}). Each lab implements a quantum instrument on the system associated with a linear map from an input Hilbert space $\mathcal{H}_i^{I}$ to an output Hilbert space $\mathcal{H}_i^{O}$, and  with a classical outcome $a_i$. The instrument implemented in lab $A_i$ is described by a completely positive map $\mathcal{M}_{a_i}:\mathcal{H}_i^{I}\rightarrow\mathcal{H}_i^{O}$, represented by its Choi operator \cite{choi_completely_1975,jamio72},
\begin{align}
M_{a_i}
=
(\mathcal{I}\otimes \mathcal{M}_{a_i})
\bigl(
|\mathbb{I}\rangle\langle \mathbb{I}|
\bigr),
\end{align}
which belongs to $\mathcal{L}(\mathcal{H}_i^{I}\otimes \mathcal{H}_i^{O})$. Here \(\mathcal{I}\) denotes the identity map on \(\mathcal{L}(\mathcal{H}_i^{I})\), and
$|\mathbb{I}\rangle=\sum_{n}|n\rangle\otimes |n\rangle$
is the unnormalised maximally entangled state on \(\mathcal{H}_i^{I}\otimes \mathcal{H}_i^{I}\), (equivalently the vectorised identity operator).

The multi-time process is fully characterised by a positive operator $W$, called the \emph{process matrix}, which belongs to  $\mathcal{L}\left(\bigotimes_{i=1}^{N}(\mathcal{H}_i^{I}\otimes\mathcal{H}_i^{O})\right)$. Given a set of operations $\{M_{a_i}\}$ implemented at each lab, the joint probability of observing outcomes $a_1,\ldots,a_N$ is given by the generalized Born rule \cite{Shrapnel_2018}

\begin{equation}
\label{born rule}
p(a_1,\ldots,a_N)
=
\mathrm{Tr}\!\left[
W^{T}
\left(
M_{a_1}\otimes \cdots \otimes M_{a_N}
\right)
\right].
\end{equation}
 A process matrix generalises quantum states and dynamical maps to the multi-time scenario and is consistent with classical stochastic processes in the appropriate limit~\cite{chiribella09b,pollock_operational_markov,Pollock_pra,costa2016,PhysRevX.10.041049}. If the operations implemented across the labs form an informationally complete set, the process matrix $W$ can be reconstructed from the observed probabilities. The reconstruction of $W$ from experimental data collected in the labs is known as multi-time process tomography, which we briefly review below.

\begin{figure}[t]
    \centering
    \includegraphics[width=0.4\textwidth]{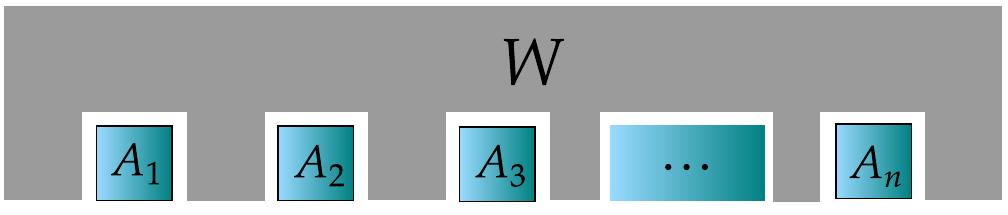}
    \caption{Schematic representing a multi-time process $W$ where the transformations might in general be correlated in time. The probes are depicted by $A_i$ and have associated input and output Hilbert spaces $\mathcal{H}_{A_i^{I}}$ and $\mathcal{H}_{A_i^{O}}$ respectively, where an arbitrary operation can be performed. Multi-time process tomography refers to reconstructing an unknown $W$ through implementing informationally complete set of operations at the probes.}
    \label{Fig:process_W}
\end{figure}

\emph{Multi-time process tomography}: In order to characterise a multi-time process one must implement an informationally complete set of operations across the laboratories. Consider $T_a$ to denote the Choi of a \emph{set} of experimental probe operations and outcomes at each lab, where $a$ is an index over the outcomes in the set. $T_a$ is a positive semi-definite operator in the space $\mathcal{L}\left(\bigotimes_{i=1}^{N}(\mathcal{H}_i^{I}\otimes\mathcal{H}_i^{O})\right)$. For example, in Eq.~\eqref{born rule}, $T_a$ is the tensor product of Choi operators from the set of operations $\{M_{a_i}\}$ with $a \equiv \{a_1,\cdots,a_N\}$. In general, the probes could represent correlated operations between labs known as testers or superinstruments \cite{taranto2025higherorderquantumoperations} (Also see Supplementary Material \cite{supplementary}). If this set is informationally complete on the full space, the corresponding probabilities $p_a = \mathrm{Tr}[W^{T} T_a]$
contain sufficient information to reconstruct the unknown process matrix $W$. In linear inversion tomography the reconstruction is obtained by expanding the vectorised process matrix in a dual frame of the probe operators,
\begin{equation}
|W\rangle = \sum_a p_a\,|D_a\rangle ,
\end{equation}
where $|D_a\rangle$ are the corresponding dual vectors satisfying $\sum_a |D_a\rangle\!\langle T_a| = I$, and $|T_a\rangle$ denotes the vectorised form of the operator $T_a$, defined as $I\otimes T_a \ket{\mathbb{I}}$. A canonical choice of dual vectors can be constructed using the frame operator $F = \sum_a |T_a\rangle\!\langle T_a|$, for which $|D_a\rangle = F^{-1}|T_a\rangle$. The probe set $\{T_a\}$ is informationally complete if and only if the frame operator $F$ is invertible, which is equivalent to the set of operations $\{T_a\}$ spanning the space $\mathcal{L}\left(\bigotimes_{i=1}^{N}(\mathcal{H}_i^{I}\otimes\mathcal{H}_i^{O})\right)$, ensuring that the process matrix can be uniquely reconstructed from the measured probabilities.

\emph{Informationally complete requirement and restricted operations}:
Consider first an operation in a single lab acting on a $d$-dimensional system, described by a linear, completely positive map $\mathcal{M}_a:\mathcal{H}^I\to\mathcal{H}^O$, where $a$ labels the classical outcome (for deterministic operations, this label simply specifies the chosen setting). In the Choi representation, this operation is associated with a positive semi-definite operator $M_a^{IO}\in\mathcal{L}(\mathcal{H}^I\otimes\mathcal{H}^O)$. Since the real vector space of Hermitian operators on $\mathcal{H}^I\otimes\mathcal{H}^O$ has dimension $d^4$, an informationally complete set of lab operations must span a $d^4$-dimensional space. Equivalently, one requires $d^4$ linearly independent Choi operators. This can be observed, for instance, by expanding operators in a basis such as $\{\sigma_\alpha\otimes\sigma_\beta\}$, where $\{\sigma_\alpha\}$ denotes a generalised Pauli operator basis on a $d$-dimensional Hilbert space, where $\sigma_0=I$ and $\forall i \neq 0$, $\sigma_i$ is traceless. Therefore, any informationally complete set of operations for a single lab must contain at least $d^4$ linearly independent elements.


If we restrict the intermediate probes to unitary operations acting on the system then the accessible set of lab operations is $\mathcal{U}=\{\,|U\rangle\langle U|:U^\dagger U = UU^\dagger = \mathbb{I}\,\}$. In the Choi representation, the unitarity constraint is equivalent to requiring that tracing over either the input or the output space yields the identity operator. In an expansion in a generalised Pauli operator basis, this implies that all coefficients multiplying terms of the form $\mathbb{I}\otimes \sigma_i$ and $\sigma_i\otimes \mathbb{I}$ must vanish, where $i \neq 0$. As a result, the space spanned by local unitary operations has dimension only $(d^2-1)^2+1$ (see the Supplementary Material~\cite{supplementary} for a proof). Hence, unitary operations do not form an informationally complete set of operations and are therefore insufficient for complete reconstruction of a multi-time process. For instance, unitary operations would not generate outcomes corresponding to entropy decreasing operations like amplitude damping channels. Nevertheless,  unitary probes are still operationally meaningful. In particular, they define a restricted process describing the statistics accessible when the intermediate interventions are limited to unitary (or mixed unitary) operations, followed by a final measurement.  Such restricted processes have been recently investigated for witnessing temporal correlations and for bounding their strength \cite{Milz_PRA_2018,White2025whatcanunitary}. It is also worth noting that the dimension of the space spanned by local unitaries has the same leading-order scaling in $d$ as that of the full operator space. Consequently, as the system dimension increases, the fraction of the total operator space spanned by local unitaries approaches one.

A broader class of probe operations is given by arbitrary deterministic maps, i.e.\ completely positive and trace-preserving (CPTP) maps. In the Choi representation, trace preservation requires that tracing over the output space yields the identity on the input space. Equivalently, in an expansion in a generalised Pauli operator basis, all coefficients associated with the terms of the form $\sigma_i\otimes\mathbb{I}$ with $i\neq 0$ must vanish. The space spanned by deterministic operations therefore has dimension only $d^2(d^2-1)+1$ (see the Supplementary Material~\cite{supplementary} for more details). Since this is still smaller than the full dimension $d^4$ required for informational completeness, deterministic operations alone are insufficient for complete reconstruction of a multi-time process. Therefore, an informationally complete probe set must \emph{necessarily include non-deterministic operations}. This is one of the main obstacles to experimental characterisation of multi-time processes. In practice, implementing such operations typically requires mid-circuit measurement and conditional re-preparation, which are substantially more demanding than gate-only interventions. As a result, most existing experiments have been limited to partial characterisation of multi-time processes \cite{White2020,White2022,White2025whatcanunitary,Milz_PRA_2018}. Full multi-time tomography has only recently been demonstrated experimentally on superconducting devices \cite{giarmatzi2023multitime}, although, the mid-circuit measurement and re-preparation steps contribute additional overhead and can introduce further errors through idling and measurement-induced crosstalk. It is therefore highly desirable to realise informationally complete probe sets using alternatives to mid-circuit measurement.

A standard example of a non-deterministic operation is a measure-and-prepare map, which in the Choi representation, is described by the operator $|a\rangle\langle a|^{T}\otimes |\psi\rangle\langle\psi|,$
where \( |a\rangle \) is a state vector specifying the rank-one measurement effect on the input space, \( |\psi\rangle \) is the prepared output state, and the transpose is taken with respect to the fixed computational basis. Such operations are sufficient to form an informationally complete set, since their linear span covers the full operator space. A straightforward coherent implementation of measure-and-prepare uses an ancilla of the same dimension as the system together with a SWAP unitary. Specifically, if the ancilla is prepared in the state $|\psi\rangle$, and after the SWAP interaction one measures the ancilla with POVM element $|a\rangle\langle a|$, then the induced operation on the system has Choi operator $|a\rangle\langle a|^{T}\otimes |\psi\rangle\langle\psi|$, exactly reproducing the desired measure-and-prepare map (see Fig.~\ref{fig:MP_all} (a,b)). Thus, access to an ancilla of dimension $d$ is sufficient to realise an informationally complete probe set for a single $d$-dimensional lab. This naturally raises the question: what is the smallest ancilla dimension required to realise such a set? The following theorem shows that a \emph{single qubit} ancilla already suffices.

\begin{theorem}\label{Th1}
An informationally complete set of operations for a single lab acting on a $d$-dimensional system can be implemented using a single qubit ancilla, coherent system--ancilla interactions, and measurement of the ancilla.
\end{theorem}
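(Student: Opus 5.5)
The plan is to show that the operations induced on the system by a qubit ancilla, a joint unitary, and an ancilla measurement already span the full $d^4$-dimensional operator space $\mathcal{L}(\mathcal{H}^I\otimes\mathcal{H}^O)$. The argument combines two facts: deterministic maps (realisable by a unitary dilation, which here we want to avoid needing) span $d^2(d^2-1)+1$ dimensions, and the missing directions are exactly the operators of the form $\sigma_i\otimes\mathbb{I}$ with $i\neq0$. So it suffices to produce a family of qubit-ancilla instruments whose Choi operators span the deterministic subspace, plus enough non-deterministic elements to reach every $\sigma_i\otimes\mathbb{I}$.

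Concretely, fix the ancilla in $|0\rangle$, apply a system--ancilla unitary $V$ on $\mathcal{H}^{S}\otimes\mathbb{C}^2$, and measure the ancilla in the computational basis. This yields a two-outcome instrument with Kraus operators $K_b=\langle b|V|0\rangle$, $b\in\{0,1\}$, satisfying $K_0^\dagger K_0+K_1^\dagger K_1=\mathbb{I}$. Conversely, by unitary dilation, \emph{any} pair of operators $K_0,K_1$ with this completeness relation is realisable. The accessible Choi operators therefore include every rank-one operator $|K\rangle\langle K|$ with $K^\dagger K\le\mathbb{I}$: take $K_0=K$ and $K_1=\sqrt{\mathbb{I}-K^\dagger K}$.

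The core step is to show that $\{|K\rangle\langle K|:\|K\|\le1\}$ spans $\mathcal{L}(\mathcal{H}^I\otimes\mathcal{H}^O)$. Rescaling by a positive constant does not affect linear span, so the norm constraint is irrelevant, and the set $\{|K\rangle\langle K|:K\in\mathcal{L}(\mathbb{C}^d)\}$ is the set of all rank-one positive operators on the $d^2$-dimensional space $\mathcal{H}^I\otimes\mathcal{H}^O$. These span all operators. To see this, polarisation gives $|x\rangle\langle y|$ as a combination of $|x+i^k y\rangle\langle x+i^k y|$ for $k=0,\dots,3$. Alternatively, choose measure-and-prepare elements $K=|\psi\rangle\langle a^*|$, so that $|K\rangle\langle K|=|a\rangle\langle a|^T\otimes|\psi\rangle\langle\psi|$ up to scaling; these are already known from the preceding discussion to span the space. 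Thus the measure-and-prepare maps with $|a\rangle,|\psi\rangle$ ranging over informationally complete sets of states give $d^4$ linearly independent accessible Choi operators, each realised with a single qubit ancilla by the dilation $V$ of the pair $\bigl(c|\psi\rangle\langle a^*|,\sqrt{\mathbb{I}-c^2|a^*\rangle\langle a^*|}\bigr)$, with $0<c\le1$.

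The main obstacle is making the dilation step explicit and convincing. One has to verify that a unitary $V$ on $d\times2$ dimensions with the prescribed first block column $\binom{K_0}{K_1}$ exists. It does, because that block column is an isometry from $\mathbb{C}^d$ into $\mathbb{C}^{2d}$, and an isometry can always be completed to a unitary. The proof should give or at least describe this completion, for example via Gram--Schmidt or the standard block form
\[
V=\begin{pmatrix}K_0 & -\sqrt{\mathbb{I}-K_0K_0^\dagger}\\ K_1 & \cdot\end{pmatrix}.
\]
It should also stress that only the single outcome $b=0$ of each setting is used in the tomographic frame, so no reset or second system-sized ancilla is needed.
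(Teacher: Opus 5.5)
Your proposal is correct and follows essentially the paper's route: you realise the measure-and-prepare Kraus operator $\ketbra{\psi}{a^*}$ as the outcome-$0$ branch of a system--qubit unitary acting on the ancilla prepared in $\ket{0}$, and then use the fact that the resulting Choi operators $\ketbra{a}{a}^{T}\otimes\ketbra{\psi}{\psi}$ span the full operator space. The only difference is that the paper makes your dilation step explicit with the circuit $(V\otimes\mathbb{I})C_X(U\otimes\mathbb{I})$, choosing $U^\dagger\ket{0}=\ket{a^*}$ and $V\ket{0}=\ket{\psi}$; this is a concrete unitary completion of your isometry in the case $c=1$, so the ``main obstacle'' you flag is handled by a single controlled gate between local unitaries.
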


\begin{proof}
Consider a qubit ancilla prepared in the state $\ket{0}$, and let $U_{SA}$ be a joint unitary acting on system and ancilla. Measuring the ancilla in the computational basis $\{\ket{m}\}_{m=0,1}$ induces Kraus operators $K_{m,0}=(\mathbb{I}\otimes \bra{m})U_{SA}(\mathbb{I}\otimes \ket{0})$. The corresponding Choi operator is the rank-one projector $\ketbra{K_{m,0}}{K_{m,0}}$, where $\ket{K_{m,0}}=(\mathbb{I}\otimes K_{m,0})\ket{\mathbb{I}}$. Moreover, consider the following joint unitary  $U_{SA}=(V\otimes \mathbb{I})C_X(U\otimes \mathbb{I})$, where $U$ and $V$ are unitaries acting on the system, and $C_X=\ketbra{0}{0}\otimes \mathbb{I}+(\mathbb{I}-\ketbra{0}{0})\otimes X$, with $X$ the Pauli $X$ operator acting on the ancilla (see Fig.~\ref{fig:Single_lab_IC_setup}). For the measurement outcome $m=0$, one obtains $K_{0,0}=V\ketbra{0}{0}U$. With the following choice of local unitaries, i.e., $U^\dagger\ket{0}=\ket{a^*}$ and $V\ket{0}=\ket{\psi}$, we obtain $K_{0,0}=\ketbra{\psi}{a^*}$. The corresponding Choi operator is therefore $\ketbra{K_{0,0}}{K_{0,0}}=\ketbra{a}{a}^{T}\otimes \ketbra{\psi}{\psi}$, which is precisely the Choi form of a measure-and-prepare operation and such operators span the full operator space. Hence the accessible set is informationally complete.
\end{proof}

\begin{figure}
    \centering
    \includegraphics[width=0.9\linewidth]{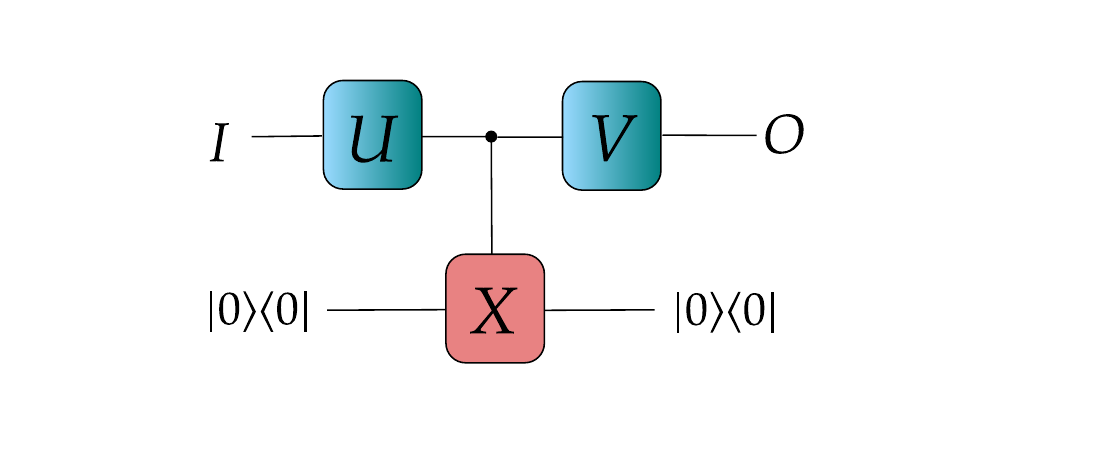}
    \caption{System--ancilla unitary construction for a single lab operation. The upper wire denotes the system and the lower wire a single-qubit ancilla. If $U^\dagger\ket{0}=\ket{a^*}$ and $V\ket{0}=\ket{\psi}$, then the ancilla measurement outcome $\ketbra{0}{0}$ induces the Kraus operator $K_{0,0}=\ketbra{\psi}{a^*}$, whose Choi operator is $\ketbra{K_{0,0}}{K_{0,0}}=\ketbra{a}{a}^{T}\otimes\ketbra{\psi}{\psi}$. Varying $\ket{a}$ and $\ket{\psi}$ generates a family of measure-and-prepare operations that spans the full operator space. }
    \label{fig:Single_lab_IC_setup}
\end{figure}

For the simplest experimental scenario with a qubit system i.e., both the input and output dimensions are two, an informationally complete set requires $16$ linearly independent operations. A convenient choice is to use ten unitary operations together with six non-deterministic operations. In particular, the set of unitaries ($I$, $X$, $Y$, $Z$, $R_X(\pi/2)$, $R_Y(\pi/2)$, $R_Z(\pi/2)$, $H$, $R_Z(\pi/2)X$, $R_X(\pi/2)Y$) results in linearly independent Choi operators and is readily implementable with high fidelity, here $R_{\sigma}(\theta)$ denotes a rotation by angle $\theta$ about the axis $\sigma$. The remaining six operations can be realised by preparing the ancilla in the positive eigenstate of a nontrivial Pauli operator, applying a SWAP unitary, and then measuring the ancilla in the same Pauli basis. Repeating this construction for the three Pauli bases yields six non-deterministic operations. Together, these sixteen operations form an informationally complete probe set for a single intermediate qubit lab.

\begin{figure}[t]
    \centering

    \begin{minipage}[t]{0.48\columnwidth}
        \centering
        \includegraphics[width=\linewidth]{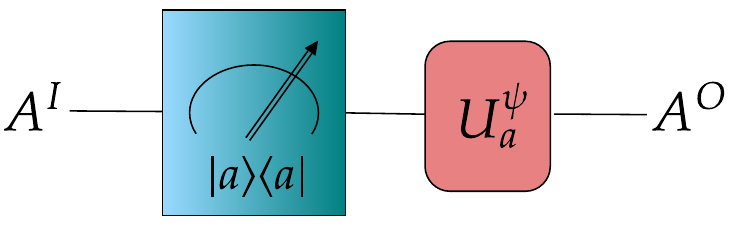}
        \textbf{(a)}
    \end{minipage}
    \hfill
    \begin{minipage}[t]{0.48\columnwidth}
        \centering
        \includegraphics[width=\linewidth]{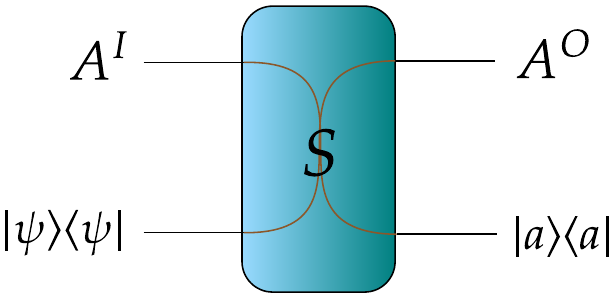}
        \textbf{(b)}
    \end{minipage}

    \vspace{0.25cm}

    \begin{minipage}[t]{0.98\columnwidth}
        \centering
        \includegraphics[width=\linewidth]{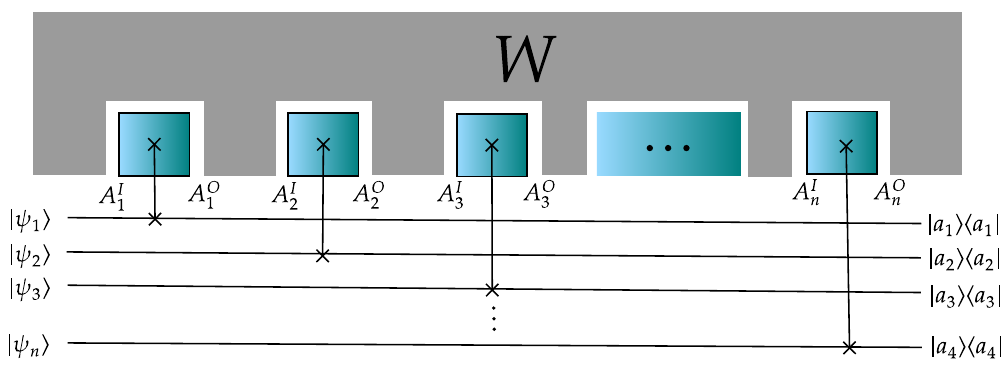}
        \textbf{(c)}
    \end{minipage}

    \caption{(a) Schematic depiction of measure and rotation resulting in the operations $|a\rangle\langle a|^T\otimes |\psi\rangle\langle\psi|$, which forms an informationally complete set of operations, however, difficult to implement in present devices. (b) Alternative implementation of measure and prepare operation through an ancilla and swap operation. (c) Measure and prepare in all the intermediate probes can be implemented if there are no restrictions on the number and the dimension of ancilla. However, it comes with a significant resource cost in addition to the implementation issue pertaining to applying SWAP with non-neighbouring ancillas.}
    \label{fig:MP_all}
\end{figure}

\noindent \textit{IC for multiple intermediate labs}: The preceding discussion shows that, if one has access to a $d$-dimensional ancilla at every intermediate lab, then measure-and-prepare operations can be implemented independently at each intervention, and informational completeness follows immediately  (see Fig.~\ref{fig:MP_all}(c)). By Theorem~\ref{Th1}, this requirement can be reduced further: a single qubit ancilla per lab already suffices to realise an informationally complete set of operations independently at each intervention. However, this still assumes access to a fresh ancilla for every lab. We now turn to the main question of this work --- what is the minimal ancillary resource required to implement informationally complete operations across an arbitrary number of labs? Related questions about minimal ancilla requirements for implementing general quantum dynamics have been studied previously, but typically under the assumption that the ancilla can be reset between uses \cite{Viola2001_PRA}. In our setting such resetting is not allowed, since we work without mid-circuit measurement. Indeed, allowing reset would effectively provide independent ancillas for each probe, in which case the implementation of an informationally complete set becomes straightforward.

Our setting is more restrictive than allowing independent operations at each lab. We consider a single ancilla prepared initially in a pure state, interacting unitarily with the system at successive probes, and measured only at the end. The resulting object is therefore not a product of independent lab operations, but a correlated multi-time probe, i.e.\ a superinstrument element, whose structure is constrained by causality \cite{taranto2025higherorderquantumoperations}. For example, for two intermediate labs $A_1$ and $A_2$ as shown in Fig.~\ref{fig:Testor}, preparing the ancilla in a pure state $|\psi\rangle$, letting it interact with the system through joint unitaries $U_1$ and $U_2$, and finally measuring it with outcome $a$, induces a correlated operation
\begin{equation}
    \label{eq:superinstrument}
    T_a = |\psi\rangle\langle\psi|\star [U_1]\star[U_2]\star|a\rangle\langle a|,
\end{equation}
where $[U]=|U\rangle\langle U|$ denotes the Choi operator of the unitary channel $U$, and $\star$ is the link product, i.e.\ contraction over the common ancilla spaces. This construction imposes a strong structural restriction on the accessible probes. Since all correlations between different labs are mediated solely by the ancilla, each $T_a$ admits a matrix-product-operator (MPO) representation whose bond dimension across any bipartition of the labs is at most $d_a^2$, where $d_a$ is the ancilla dimension. Product probes, corresponding to independent quantum operations at each lab with no ancilla-mediated memory or feedforward between probes, have bond dimension $1$, while genuinely correlated probes require nontrivial bond dimension. However, not every MPO of bond dimension at most $d_a^2$ is physically realisable by the sequential ancilla construction in Fig.~\ref{fig:Testor}, because the inter-lab correlations must arise from a single coherently evolving ancilla and are therefore more constrained than a generic low-bond-dimension MPO. The important question is therefore whether this  restricted family of correlated probes can nevertheless span the full multi-time operator space. We now turn to our main result which answers the above in the affirmative.

\begin{theorem}\label{Th2}
A single qubit ancilla suffices to implement an informationally complete set of probe operations for an arbitrary number of intermediate labs acting on $d$-dimensional systems, using only joint system--ancilla unitaries at the probes and a single final ancilla measurement.
\end{theorem}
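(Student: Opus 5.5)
Since informational completeness only concerns the linear span, it suffices to show that the span of the accessible correlated probes $T_a$ contains every product operator $M_1\otimes\cdots\otimes M_N$, with each $M_i$ ranging over a spanning set of $\mathcal{L}(\mathcal{H}_i^{I}\otimes\mathcal{H}_i^{O})$. By the proof of Theorem~\ref{Th1}, we may take $M_i=\ketbra{a_i}{a_i}^{T}\otimes\ketbra{\psi_i}{\psi_i}$, since measure-and-prepare Choi operators span the single-lab space. The plan is to realise such product terms, up to linear combinations of other accessible probes, using one qubit ancilla that is never reset and is measured only at the end.

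The key construction extends the circuit of Theorem~\ref{Th1}. At lab $i$ we apply $U_i=(V_i\otimes\mathbb{I})\,C_X\,(U_i'\otimes\mathbb{I})$, with $U_i'^\dagger\ket{0}=\ket{a_i^*}$ and $V_i\ket{0}=\ket{\psi_i}$. Here $C_X$ flips the ancilla unless the rotated system is in $\ket{0}$, so the ancilla is a ``flag'' that records whether the projection onto $\ket{0}$ failed. With the ancilla starting in $\ket{0}$, the only coherent branch in which it stays in $\ket{0}$ throughout is the one where every lab applies $K_{0,0}^{(i)}=\ketbra{\psi_i}{a_i^*}$. The difficulty is that after a failure at lab $i$ the ancilla is in $\ket{1}$, and a later failure flips it back, because $X^2=\mathbb{I}$. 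The final outcome $0$ therefore records only the parity of the failures, and the resulting operator has Kraus branches $\sum_{S\text{ even}}\bigotimes_{i\in S}F_i\bigotimes_{i\notin S}K_i$, where $F_i=V_i(\mathbb{I}-\ketbra{0}{0})U_i'$.

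To remove the parity problem, I would first apply $C_X$ at every lab and then expand. The expansion uses the ancilla outcome $m\in\{0,1\}$ together with ancilla pre-rotations (for example a Hadamard on the ancilla at chosen labs, and a final measurement in the $X$ basis) to extract interference terms, so that $\ket{0}$-branch and $\ket{1}$-branch contributions can be separated.

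A cleaner alternative, which I would try first, is a $2^N$-setting inversion argument. At each lab choose either the controlled gate above or a plain system unitary $V_iU_i'$ with the ancilla untouched. Choose also the sign pattern via an ancilla $Z$ phase. Then take the signed combination $\sum_{s}(-1)^{|s|}T^{(s)}$. This inclusion–exclusion isolates the term in which the projector $\ketbra{0}{0}$ acts at every lab. The reason is that $\mathbb{I}-\ketbra{0}{0}$ and $\mathbb{I}$ differ by exactly the projector, so the multilinear expansion of $\bigotimes_i(\text{plain}-\text{failure})$ gives $\bigotimes_i$ projector branches. The main obstacle is at the Choi level: the operators are rank-one projectors $\ketbra{K}{K}$, so cross terms $\ket{K_S}\bra{K_{S'}}$ appear, and these are not simply multilinear. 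I would handle this by polarisation. Using ancilla phases $e^{i\phi}$ on the flagged branch, and averaging over $\phi\in\{0,\pi/2,\pi,3\pi/2\}$ independently per lab, the off-diagonal branch coherences get killed lab by lab. This is possible because each lab's flag flip multiplies the relevant amplitude by its own phase, and the correlation between labs is carried only by the bond-dimension-$4$ ancilla channel.

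Once a product of rank-one Kraus Choi operators $\bigotimes_i\ketbra{K^{(i)}_{0,0}}{K^{(i)}_{0,0}}$ lies in the span for all choices of $\ket{a_i},\ket{\psi_i}$, informational completeness follows from Theorem~\ref{Th1} and the tensor-product structure of the full operator space. The step I expect to be hardest is the careful bookkeeping showing that the phase averaging plus signed combination cancels exactly all mixed-parity coherences without the ancilla reset. If it fails, the fallback is induction on $N$. Assume the span already contains the full space for $N-1$ labs conditioned on the ancilla being in $\ket{0}$ or $\ket{1}$ (with a relabelling by $X$). Then show that lab $N$ appends an arbitrary spanning factor, tracking the $2\times2$ ancilla operator space.
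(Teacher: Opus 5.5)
Your reduction to product operators is fine, but the step that carries the theorem is never established, and neither mechanism you propose achieves it. That step is exhibiting a linear combination of accessible probes equal to a single product term.

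The phase averaging fails first. Inserting $R_{\phi_n}$ on the ancilla after lab $n$ multiplies the ket/bra path pair $(\vec\alpha,\vec\beta)$ of $T_{00}$ by $e^{i\phi_n(\alpha_n-\beta_n)}$. Averaging over $\phi_n\in\{0,\pi/2,\pi,3\pi/2\}$ therefore keeps exactly the terms with $\alpha_n=\beta_n$. Ancilla phases see only the difference $\alpha_n-\beta_n$, so they can never separate the $(0,0)$ branch from the $(1,1)$ branch at a bond. What survives is $\sum_{\vec\alpha}\bigotimes_n\ketbra{K^{(n)}_{\alpha_n\alpha_{n-1}}}{K^{(n)}_{\alpha_n\alpha_{n-1}}}$ with $\alpha_0=\alpha_N=0$. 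This is precisely the parity sum you wanted to remove, now incoherent.

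The inclusion--exclusion over gadget versus plain labs also fails. It rests on the Kraus-level identity $G_i:=V_iU_i'=K_i+F_i$, whereas the Choi operator is sesquilinear. Writing $[A]=\ketbra{A}{A}$, one has $[G_i]=[K_i]+[F_i]+\ketbra{K_i}{F_i}+\ketbra{F_i}{K_i}$. These cross terms live entirely on the system, since the ancilla is idle at a plain lab, so no ancilla phase can remove them. Concretely, for $N=2$, after dephasing and with outcome $0$, your signed combination equals $([G_1]-[K_1])\otimes([G_2]-[K_2])+[F_1]\otimes[F_2]$, not $[K_1]\otimes[K_2]$. The induction fallback only restates what has to be proved.

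The missing idea, which is how the paper proceeds, is to select the $+1$ Fourier component rather than the $0$ component. The combination $\frac14(T(0)-T(\pi)-iT(\pi/2)+iT(-\pi/2))$ in $\theta_n$ keeps only $\alpha_n-\beta_n=1$, and for a qubit ancilla this forces $(\alpha_n,\beta_n)=(1,0)$. Doing this at every bond isolates the single non-Hermitian product $\ketbra{K^{(1)}_{10}}{K^{(1)}_{00}}\otimes\ketbra{K^{(2)}_{11}}{K^{(2)}_{00}}\otimes\cdots\otimes\ketbra{K^{(N)}_{01}}{K^{(N)}_{00}}$. So you should drop the goal of isolating $\bigotimes_i\ketbra{K^{(i)}_{00}}{K^{(i)}_{00}}$.

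You then need more freedom than your fixed Theorem~\ref{Th1} gadget. There $\ketbra{K_{10}}{K_{00}}=\ketbra{F}{K}$ is a sum of tensor products of traceless operators, so it cannot span the single-lab space. The paper instead uses the general block form of a system--qubit unitary: $K_{10}=W\sqrt{I-K_{00}^\dagger K_{00}}$, $K_{01}=\sqrt{I-K_{00}K_{00}^\dagger}\,V$, $K_{11}=-WK_{00}^\dagger V$. It chooses $K_{00}=\sigma_\nu/\sqrt2$ and $W,V$ among Weyl operators, so that every factor is proportional to $\ketbra{\sigma_\mu}{\sigma_\nu}$ with arbitrary $\mu,\nu$.

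Your dephased picture could also be rescued a different way, which you gestured at but did not carry out. The final ancilla bit is the parity of the flips, so after phase averaging $T_{00}-T_{01}=\bigotimes_n([K_n]-[F_n])$ exactly. It would remain to show that such operators, together with unitary Choi operators from plain labs, span $\mathcal{L}(\mathcal H_d\otimes\mathcal H_d)$.
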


\begin{proof}
We provide a constructive proof for two intermediate labs \(A_1\) and \(A_2\); the extension to an arbitrary number of labs is given in the Supplementary Material~\cite{supplementary}. Fix the ancilla input state and final measurement outcome to be \(\ket{0}\). If \(U_1\) and \(U_2\) are the joint system--ancilla unitaries at the two labs, then the resulting two-lab probe is
\begin{align}
T_{00}
=
\sum_{\alpha,\beta=0}^{1}
\ketbra{K^{(1)}_{\alpha,0}}{K^{(1)}_{\beta,0}}
\otimes
\ketbra{K^{(2)}_{0,\alpha}}{K^{(2)}_{0,\beta}},
\end{align}
where
$K^{(i)}_{m,n}
=
(I^{S}\otimes \bra m)\,U_i\,(I^{S}\otimes \ket n)$
are the \(d\times d\) ancilla blocks of \(U_i\) which takes the form (see Supplementary Material ~\cite{supplementary} for proof)
\begin{equation}
\label{Block_unitary}
U_{n}
=
\begin{pmatrix}
K_{00}^{(n)} & \sqrt{I-K_{00}^{(n)}K_{00}^{(n)\dagger}}\,V \\
W\sqrt{I-K_{00}^{(n)\dagger}K_{00}^{(n)}} & -W K_{00}^{(n)\dagger}V
\end{pmatrix},
\end{equation}
where \(V\) and \(W\) are arbitrary unitary operators on \(\mathcal{H}_d\). Therefore, \(T_{00}\) is a sum of product operators on the two lab spaces. In order to isolate a single product term, insert between the two interactions an ancilla phase gate
\begin{align}
R_{\theta}=|0\rangle\langle 0|+e^{i\theta}|1\rangle\langle 1|,
\end{align}
which updates \(K^{(1)}_{1,0}\) by  a phase term \(e^{i\theta}\), while leaving \(K^{(1)}_{0,0}\) unchanged, and therefore produces a family of probes \(T_{00}(\theta)\). The linear combination
\begin{align}
f_\theta(T_{00})
=
\frac14\Bigl(
T_{00}(0)-T_{00}(\pi)-iT_{00}(\pi/2)+iT_{00}(-\pi/2)
\Bigr)
\end{align}
filters out exactly the component with ancilla branch change \(0\to 1\) in the first lab and \(1\to 0\) in the second, yielding the following product term
\begin{align}
f_\theta(T_{00})
=
\ketbra{K^{(1)}_{1,0}}{K^{(1)}_{0,0}}
\otimes
\ketbra{K^{(2)}_{0,1}}{K^{(2)}_{0,0}}.
\end{align}
Consequently, it suffices to show that each factor can be chosen arbitrarily from a basis of the single-lab operator space \(\mathcal{L}(\mathcal H_d\otimes \mathcal H_d)\). To this end, we use the block form of a joint unitary with qubit ancilla given in Eq.~\eqref{Block_unitary}. In particular, exploiting the freedom that unitary $W$ and $V$ are arbitrary, one may choose the blocks so that
\begin{align}
K^{(1)}_{0,0}=\frac{\sigma_\nu}{\sqrt2},
\qquad
K^{(1)}_{1,0}=\frac{\sigma_\mu}{\sqrt2},
\end{align}
and similarly
\begin{align}
K^{(2)}_{0,0}=\frac{\sigma_{\nu'}}{\sqrt2},
\qquad
K^{(2)}_{0,1}= \frac{\sigma_{\mu'}}{\sqrt{2}},
\end{align}
where \(\{\sigma_\mu\}_{\mu=0}^{d^2-1}\) is a generalized Pauli (Heisenberg--Weyl) operator basis on \(\mathcal{L}(\mathcal H_d)\). It follows that $\ketbra{K^{(1)}_{1,0}}{K^{(1)}_{0,0}}
\propto
\ketbra{\sigma_\mu}{\sigma_\nu}$ and  $\ketbra{K^{(2)}_{0,1}}{K^{(2)}_{0,0}}
\propto
\ketbra{\sigma_{\mu'}}{\sigma_{\nu'}}.$
Since the operators $\left\{\ketbra{\sigma_\mu}{\sigma_\nu}\right\}_{\mu,\nu=0}^{d^2-1}$
form a basis of \(\mathcal{L}(\mathcal H_d\otimes \mathcal H_d)\), their tensor products span the full two-lab operator space. Therefore the family of probes generated by a single qubit ancilla is informationally complete for two intermediate labs. The proof for arbitrary many labs follows by iterating the same phase-filtering argument; see the Supplementary Material~\cite{supplementary}.
\end{proof}

\noindent \textit{Implication for quantum control:}
A central takeaway of Theorem~ \ref{Th2} is that the ancilla-assisted family of probe operations is not only sufficient for reconstructing the full multi-time process matrix, but also sufficient for evaluating \emph{any} linear functional of the process. This has direct consequences for control, verification, and memory quantification tasks, where the quantity of interest is typically expressed as the expectation value of a multi-time observable (tester) against the underlying process. Let \(W\) denote the unknown multi-time process and let \(O\) be any Hermitian operator on the same multi-time space, representing, for example, a tester effect, witness, or control objective. The associated linear functional is
$\langle O\rangle_W = \Tr[WO].$
Since the ancilla-assisted probes \(\{T_a\}\) generated by our protocol span the full operator space, any such \(O\) admits an expansion \(O=\sum_a c_a T_a\), and hence
$\Tr[WO]=\sum_a c_a\,\Tr[WT_a].$
Therefore, any linear multi-time figure of merit can be inferred directly from the same experimentally accessible statistics used for tomography, without directly implementing \(O\). Thus the control objective can be estimated, and in principle optimised, directly from accessible probe statistics, without full process reconstruction. This is especially useful when the desired figure of merit has a simple and well-conditioned expansion in the probe family. Viewed in this way, the protocol is a temporal analogue of circuit cutting and stitching \cite{Peng_PRL_2020,Tang_2021,Piveteau_2024}. Rather than directly implementing a complicated multi-time tester or control observable, one measures a spanning family of experimentally simple ancilla-assisted probes and reconstructs the desired functional by classical post-processing. The tradeoff is analogous in spirit, where hardware-demanding operations are replaced by a larger number of simpler experimental configurations together with additional classical processing.
 
\begin{figure}[t]
    \centering

   \begin{minipage}[t]{0.60\columnwidth}
        \centering
        \includegraphics[width=0.75\linewidth]{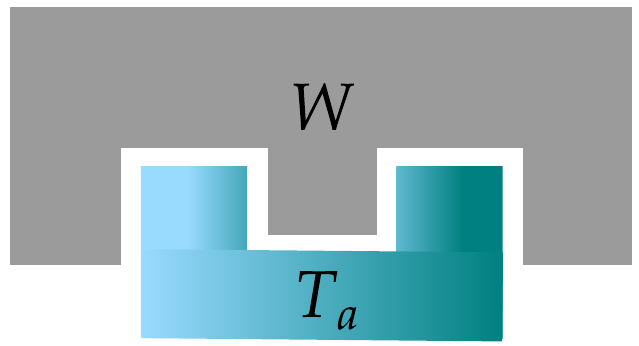}
        \textbf{(a)}
    \end{minipage}
    \vspace{0.25cm}

    \begin{minipage}[t]{0.75\columnwidth}
        \centering
        \includegraphics[width=\linewidth]{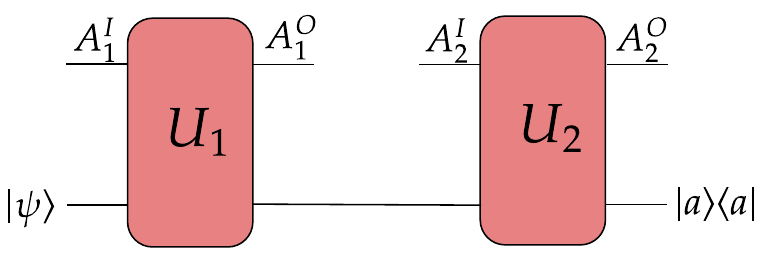}
        \textbf{(b)}
    \end{minipage}

    \caption{(a) A depiction of correlated operation across two intermediate labs $T_a$ to characterise the process $W$ (b) An implementation of the correlated instrument $T_a$ through continuous interaction with a single ancilla. The resulting operation has a restricted MPO structure.}
    \label{fig:Testor}
\end{figure}

\noindent \textit{Conclusion:}  We have shown that a single coherent qubit ancilla is sufficient to implement an informationally complete family of probes for multi-time processes of arbitrary length and system dimension, without the need for mid-circuit measurement, feed-forward, or reset. This identifies a minimal and experimentally relevant route to complete multi-time process tomography on current architectures. Since the resulting probe family spans the full multi-time operator space, the same construction also allows arbitrary linear functionals of a process, such as witnesses and control objectives, to be inferred from accessible probe statistics. It is important to note that while informational completeness is a geometric statement about span, the practical usefulness of a probe family in finite-shot experiments also depends on its conditioning and noise robustness. It would therefore be important to compare the statistical performance of different experimentally accessible probe families, including the ancilla-assisted construction introduced here. Although unitary probes are insufficient for complete tomography at any fixed finite dimension, they span an increasingly large fraction of the full operator space in high dimensions. This raises a natural question: which physically relevant quantities can still be estimated efficiently from such restricted data? In particular, even without full informational completeness, it may be possible to access important process parameters, witnesses, or control objectives that lie largely within the unitary-accessible subspace. Clarifying the identifiability of process features under restricted probe families is therefore an important direction for future work. Some aspects of this question, in the context of memory detection, were investigated in Ref.~\cite{White2025whatcanunitary}.

\acknowledgments

We thank Kamudibikash Goswami, Gavin Brennen, Fabio Costa and Kavan Modi for interesting discussions. AKR and VS acknowledge funding support from Sydney Quantum Academy. This work was partially supported by the Hon-Hai Research Institute through the Australian Quantum Software Network (AQSN) micro-grant. 

\bibliography{Non_Markovianity}
\clearpage

\setcounter{page}{1}
\renewcommand{\thepage}{S\arabic{page}}

\setcounter{equation}{0}
\renewcommand{\theequation}{S.\arabic{equation}}

\setcounter{figure}{0}
\renewcommand{\thefigure}{S\arabic{figure}}

\setcounter{table}{0}
\renewcommand{\thetable}{S\arabic{table}}

\section*{Supplemental Material}
\addcontentsline{toc}{section}{Supplemental Material}

\input{supplementary}

\end{document}

%% file: preamble.tex
\usepackage{algorithm}
\usepackage{algpseudocode}
\usepackage[T1]{fontenc}
\usepackage{inputenc}
\usepackage{xcolor}
\usepackage{babel}
\usepackage{dsfont}
\usepackage{physics}
\usepackage{mathtools}
\usepackage{bbding}
\usepackage{pifont}
\usepackage{textcomp}
\usepackage{wasysym}
\usepackage{amsthm}
\usepackage[normalem]{ulem}
\usepackage{nicefrac}
\usepackage{wasysym}
\usepackage{dsfont}
\usepackage{amsmath}
\usepackage{amssymb}
\usepackage{graphicx}

\usepackage{soul}
\definecolor{cream}{rgb}{1.0, 0.99, 0.82}
\definecolor{seafoam}{rgb}{	0.576, 0.914, 0.745}
\definecolor{olive}{rgb}{0.522, 0.666, 0.526}

\usepackage{hyperref}

\hypersetup{
     colorlinks   = true,
     linkcolor    = blue,
     citecolor    = blue,
     urlcolor     = blue,
     }

\usepackage{appendix}
\renewcommand{\selectlanguage}[1]{}
\makeatletter
\@ifundefined{textcolor}{}
{%
	\definecolor{BLACK}{gray}{0}
	\definecolor{WHITE}{gray}{1}
	\definecolor{RED}{rgb}{1,0,0}
	\definecolor{GREEN}{rgb}{0,1,0}
	\definecolor{BLUE}{rgb}{0,0,1}
	\definecolor{CYAN}{cmyk}{1,0,0,0}
	\definecolor{MAGENTA}{cmyk}{0,1,0,0}
	\definecolor{YELLOW}{cmyk}{0,0,1,0}
}
\theoremstyle{plain}

\theoremstyle{plain}

\ifx\proof\undefined
\newenvironment{proof}[1][\protect\proofname]{\par
	\normalfont\topsep6\p@\@plus6\p@\relax
	\Trivlist
	\itemindent\parindent
	\item[\hskip\labelsep
	\scshape
	#1]\ignorespaces
}{%
	\endtrivlist\@endpefalse
}
\providecommand{\proofname}{\textbf{Proof}}
\fi
\theoremstyle{plain}

\renewenvironment{proof}[1][\proofname]{\noindent {\bfseries #1.} }{\qed}

\providecommand{\lemmaname}{Lemma}
\providecommand{\definitionname}{Definition}
\providecommand{\propositionname}{Proposition}

\usepackage{babel}
\usepackage{txfonts}
\usepackage{colortbl}\definecolor{myurlcolor}{rgb}{0,0,0.7}

\renewcommand{\bra}[1]{\left\langle #1 \right|}
\renewcommand{\ket}[1]{\left| #1 \right\rangle}
\renewcommand{\braket}[2]{\left\langle #1 \middle| #2 \right\rangle}
\renewcommand{\ketbra}[2]{\left|#1\middle\rangle\!\middle\langle#2\right|}
\newcommand{\proj}[1]{\ketbra{#1}{#1}}

\newcommand{\haH}

\newtheorem{theorem}{Theorem}
\newtheorem*{thm*}{Theorem}

\newtheorem{lemma}{Lemma}
\newtheorem*{lem*}{Lemma}
\theoremstyle{definition}

\newtheorem*{cor*}{Corollary}

\newtheorem*{prop*}{Proposition}

\makeatother

%% file: supplementary.tex
The notations and symbols used here are the same as those in the main manuscript and carry the same meanings.

\section{Preliminaries}

In this section, we briefly review some preliminary material.

\subsection{Link product.}
Consider \(A\in\mathcal L(\mathcal H_A)\) and \(B\in\mathcal L(\mathcal H_B)\), where
\[
\mathcal H_A=\mathcal H_{A\setminus B}\otimes \mathcal H_{A\cap B},
\qquad
\mathcal H_B=\mathcal H_{A\cap B}\otimes \mathcal H_{B\setminus A}
\]
i.e., the operators \(A\) and \(B\) share the subsystem \(\mathcal H_{A\cap B}\). The \emph{link product} of \(A\) and \(B\), denoted \(A\star B\), is defined as \cite{chiribella09b,taranto2025higherorderquantumoperations}
\begin{equation}\label{Eq:Link_product}
A\star B
:=
\Tr_{A\cap B}
\Big[
\big(A^{T_{A\cap B}}\otimes I_{B\setminus A}\big)
\big( I_{A\setminus B}\otimes B\big)
\Big].
\end{equation}
Here, \(T_{A\cap B}\) denotes the partial transpose on the common subsystem \(\mathcal H_{A\cap B}\), and \(\Tr_{A\cap B}\) denotes the partial trace over the same subsystem. The resulting operator \(A\star B\) acts on
\[
\mathcal H_{A\setminus B}\otimes \mathcal H_{B\setminus A}.
\]
Operationally, the link product provides the natural composition rule for Choi operators ,i.e, it contracts two operators over their common input-output space and yields the Choi operator of the composed map. The link product satisfies several useful properties. First, if \(A\) and \(B\) are Hermitian, then \(A\star B\) is also Hermitian. Second, if \(A\ge 0\) and \(B\ge 0\), then \(A\star B\ge 0\). Thus, the link product preserves both Hermiticity and positivity, which is essential when composing valid quantum operations. Moreover, the link product is associative whenever the composition is well-defined, namely
\[
A\star(B\star C)=(A\star B)\star C,
\]
provided $\mathcal{H}_{A} \cap \mathcal{H}_{B} \cap \mathcal{H}_{C}=\phi$. Finally, it is commutative up to the canonical identification of tensor factors:
\[
A\star B \equiv B\star A,
\]
that is, the two expressions coincide up to a relabelling (or swap) of the remaining Hilbert spaces. 

\subsection{Constructing process matrices and superinstruments via link product}

The link product provides a compact way to write both process matrices and superinstruments directly in Choi form. Although the two constructions are analogous, they represent different higher-order operations.

\paragraph{Process matrices.}
Consider a multi-time process generated by an initial environment state \(\rho^{E_0}\) and a sequence of joint system-environment unitaries
\[
U^{(0)}_{SE},U^{(1)}_{SE},\dots,U^{(N)}_{SE}.
\]
The associated process matrix \(W\) is obtained by linking the initial environment state with all the joint unitaries and finally tracing out the last environment degree of freedom as the following \cite{taranto2025higherorderquantumoperations,Goswami_2025}
\begin{equation}
W=
\rho^{E_0}\star [U^{(0)}_{SE}]\star [U^{(1)}_{SE}]\star \cdots \star [U^{(N)}_{SE}]\star I^{E_N}.
\label{eq:process_link}
\end{equation}
Here, the final factor \(I^{E_N}\) implements the trace over the final environment system. Moreover, $[U^{(t)}_{SE}]\in \mathcal{L}(\mathcal{H}_{t}^{O}\otimes \mathcal{H}_{t+1}^{I})$
where the superscripts \(O\) and \(I\) denote, respectively, the output and input Hilbert spaces at time \(t\) and $t+1$ respectively. Therefore, the process matrix belongs to $W\in \mathcal{L}\left(\bigotimes_{t=0}^{N}\mathcal{H}_{t}^{O}\otimes \mathcal{H}_{t+1}^{I}\right)$.

By construction, \(W\geq 0\). In addition, \(W\) satisfies the linear causality constraints of a quantum comb. To state these, define the reduced operators \(W_t\), for \(t=1,\dots,N\), which belongs to the space
$\mathcal{L}\left(\bigotimes_{j=0}^{t}\mathcal{H}_{j}^{O}\otimes \mathcal{H}_{j+1}^{I}\right)$
with \(W_N=W\). The causality constraints then take the recursive form \cite{chiribella08,pollock_operational_markov,Pollock_pra,taranto2025higherorderquantumoperations}
\begin{equation}
\operatorname{Tr}_{\mathcal{H}_{t+1}^{I}}[W_t]
=
I \otimes W_{t-1},
\qquad t=1,\dots,N
\label{eq:comb_constraint}
\end{equation}
Equivalently, a valid process matrix is precisely a positive operator on
$\mathcal{L}(\bigotimes_{t=0}^{N}\mathcal{H}_{t}^{O}\otimes \mathcal{H}_{t+1}^{I})$
satisfying the hierarchy of linear trace conditions in Eq.~\eqref{eq:comb_constraint}.

\paragraph{Superinstruments/testers}
Superinstruments (or testers) are the multi-time generalisation of POVMs and quantum instruments. Just as an ordinary instrument is a collection of completely positive maps that sum to a deterministic channel, a superinstrument is a collection of positive higher-order maps that sum to a deterministic higher-order transformation. In particular, by allowing some of the Hilbert spaces to be trivial, one recovers ordinary instruments and POVMs as special cases. 

In the present setting, a superinstrument is constructed similarly to a process matrix, except that the inaccessible environment is replaced by a controllable ancilla. Let the ancilla be prepared in the state \(\ket{\psi}\), interact sequentially with the system through joint system-ancilla unitaries
$U^{(1)}_{SA},U^{(2)}_{SA},\dots,U^{(N)}_{SA},$
and finally be measured projectively with outcome \(m\). The corresponding superinstrument element is

\begin{equation}
\mathcal{T}_{\psi,m}
=
\ketbra{\psi}{\psi}\star [U^{(1)}_{SA}]\star [U^{(2)}_{SA}]\star \cdots \star [U^{(N)}_{SA}]\star \ketbra{m}{m},
\label{eq:superinstrument_link}
\end{equation}
where $[U^{(t)}_{SA}]\in \mathcal{L}(\mathcal{H}_{t}^{I}\otimes \mathcal{H}_{t}^{O})$. Unlike Eq.~\eqref{eq:process_link}, where the final environment is traced out, here the final ancilla degree of freedom is measured, producing a family \(\{\mathcal{T}_{\psi,m}\}_m\) of probabilistic higher-order maps. This is precisely the higher-order analogue of realising an ordinary instrument by coupling to an ancilla and measuring it at the end.  Each element \(\mathcal{T}_{\psi,m}\) is positive semi-definite $\mathcal{T}_{\psi,m}\ge 0,$, and the sum over outcomes yields a deterministic supermap,
\begin{equation}
\sum_m \mathcal{T}_{\psi,m}
=
\ketbra{\psi}{\psi}\star [U^{(1)}]\star [U^{(2)}]\star \cdots \star [U^{(N)}]\star I^{A_N},
\label{eq:det_superinstrument_link}
\end{equation}
where \(I^{A_N}\) implements the trace over the final ancilla. However, unlike the process matrix case, the deterministic object \(\sum_m \mathcal{T}_{\psi,m}\) satisfies the comb constraints in ~\eqref{eq:comb_constraint} with the input and output labels interchanged.  Finally, when a superinstrument is contracted with a valid process matrix, the result is a valid probability distribution over outcomes. In this sense, superinstruments are exactly the higher-order objects that extract classical information from multi-time processes, just as POVMs and instruments do in the single-time setting. 

\section{Span of local unitaries}

As mentioned in the main text, for a \(d\)-dimensional system, the linear span of Choi operators corresponding to unitary channels has dimension $(d^2-1)^2+1.$
In this section, we provide a proof of this statement. Consider \(\{\sigma_\alpha\}_{\alpha=0}^{d^2-1}\) be a Hilbert--Schmidt orthonormal operator basis on \(\mathcal H_d\), normalized as
\begin{equation}
\Tr[\sigma_\alpha^\dagger \sigma_\beta]=\delta_{\alpha\beta},
\label{eq:HS_orthonormal}
\end{equation}
where \(\sigma_0=I/\sqrt d\), and \(\sigma_i\) for \(i=1,\dots,d^2-1\) are traceless. Throughout, Roman indices \(i,j,p,q\) run only over the traceless elements \(1,\dots,d^2-1\). For a unitary \(U\), let \([U]\) denote the Choi operator of the corresponding unitary channel. Expanding \([U]\) in the product basis \(\{\sigma_\alpha^T\otimes \sigma_\beta\}\), we may write
\begin{equation}
[U]=\sum_{\alpha,\beta=0}^{d^2-1} C_{\alpha\beta}(U)\,\sigma_\alpha^T\otimes \sigma_\beta ,
\label{eq:choi_expansion_general}
\end{equation}
for suitable coefficients \(C_{\alpha\beta}(U)\). For a unitary channel, the Choi operator satisfies
\begin{equation}
\Tr_O [U]=I,
\qquad
\Tr_I [U]=I.
\label{eq:unitary_partial_trace_constraints}
\end{equation}
Using the expansion in Eq.~\eqref{eq:choi_expansion_general}, together with the fact that $\Tr[\sigma_0]=\sqrt d$ and
$\Tr[\sigma_i]=0 \quad \forall i\ge 1,$
we find that the conditions in Eq.~\eqref{eq:unitary_partial_trace_constraints} force
\begin{equation}
C_{0i}(U)=0,
\qquad
C_{i0}(U)=0
\qquad \forall i=1,\dots,d^2-1.
\label{eq:C0i_Ci0_zero}
\end{equation}
Therefore, the Choi operator of any unitary has the form
\begin{equation}
[U]
=
\frac{1}{d}\,I\otimes I
+
\sum_{i,j=1}^{d^2-1}
C_{ij}(U)\,\sigma_i^T\otimes \sigma_j .
\label{eq:unitary_choi_restricted}
\end{equation}
Equation~\eqref{eq:unitary_choi_restricted} shows that no term of the form $\sigma_i^T\otimes I$ or
 $I\otimes \sigma_j$
can appear in the Choi operator of a unitary. Since these operators span a subspace of dimension \(2(d^2-1)\), it follows that the span of unitary Choi operators is orthogonal to this subspace. Hence,
\begin{equation}
\dim \mathrm{span}\{[U]: U \text{ unitary}\}
\le d^4-2(d^2-1)
=
(d^2-1)^2+1.
\label{eq:upper_bound_span}
\end{equation}
Therefore, \((d^2-1)^2+1\) is an upper bound. It remains to show that this bound is attained. To prove that the upper bound is saturated, it is enough to show that every basis element appearing in Eq.~\eqref{eq:unitary_choi_restricted}, namely $ I\otimes I/d
\quad \text{and} \quad
\sigma_q^T\otimes \sigma_p,$
lies in the linear span of Choi operators of unitaries. The identity term is immediate since every \([U]\) contains the component \(I\otimes I/d\), this direction is already present in the span.

Now let \(\{U_\ell\}_{\ell=1}^L\) be any unitary \(2\)-design, and define
\begin{equation}
K_{pq}
:=
\frac{1}{L}\sum_{\ell=1}^L
C_{pq}(U_\ell)\,[U_\ell],
\quad
p,q\in\{1,\dots,d^2-1\},
\label{eq:Kpq_definition}
\end{equation}
where
\begin{equation}
C_{pq}(U_\ell)
=
\Tr\!\big[\sigma_p\, U_\ell \sigma_q U_\ell^\dagger\big].
\label{eq:Cpq_definition}
\end{equation}
By construction, \(K_{pq}\) is a linear combination of Choi operators of unitaries, so belongs to their span. Substituting Eq.~\eqref{eq:unitary_choi_restricted} into Eq.~\eqref{eq:Kpq_definition} gives
\begin{equation}
K_{pq}
=
\frac{1}{L}\sum_{\ell=1}^L
C_{pq}(U_\ell)
\left(
\frac{1}{d}I\otimes I
+
\sum_{i,j=1}^{d^2-1}
C_{ij}(U_\ell)\,\sigma_i^T\otimes \sigma_j
\right).
\label{eq:Kpq_expand1}
\end{equation}
Since \(p,q\ge 1\), the coefficient \(C_{pq}(U_\ell)\) averages to zero over a unitary \(1\)-design, and the identity contribution vanishes. Therefore,
\begin{equation}
K_{pq}
=
\sum_{i,j=1}^{d^2-1}
\left(
\frac{1}{L}\sum_{\ell=1}^L
C_{pq}(U_\ell)\,C_{ij}(U_\ell)
\right)
\sigma_i^T\otimes \sigma_j .
\label{eq:Kpq_expand2}
\end{equation}

It remains to evaluate the coefficient
\begin{equation}
M_{pq,ij}
:=
\frac{1}{L}\sum_{\ell=1}^L
C_{pq}(U_\ell)\,C_{ij}(U_\ell).
\label{eq:Mpqij_definition}
\end{equation}
Using Eq.~\eqref{eq:Cpq_definition}, we may rewrite this as
\begin{equation}
M_{pq,ij}
=
\frac{1}{L}\sum_{\ell=1}^L
\Tr[\sigma_p U_\ell \sigma_q U_\ell^\dagger]\,
\Tr[\sigma_i U_\ell \sigma_j U_\ell^\dagger].
\label{eq:Mpqij_traces}
\end{equation}
Equivalently,
\begin{equation}
M_{pq,ij}
=
\Tr\!\left[
(\sigma_p\otimes \sigma_i)
\left(
\frac{1}{L}\sum_{\ell=1}^L
U_\ell^{\otimes 2}
(\sigma_q\otimes \sigma_j)
(U_\ell^\dagger)^{\otimes 2}
\right)
\right].
\label{eq:Mpqij_twirl}
\end{equation}

Since \(\{U_\ell\}\) is a unitary \(2\)-design, the average in Eq.~\eqref{eq:Mpqij_twirl} equals the Haar twirl of the operator \(\sigma_q\otimes \sigma_j\). The second-moment twirl has the form
\begin{equation}
\Phi_2(X)
=
\int dU\, U^{\otimes 2} X (U^\dagger)^{\otimes 2}
=
a(X)\,I + b(X)\,F,
\label{eq:haar_second_moment}
\end{equation}
where \(F\) is the swap operator. Because \(\sigma_q\) and \(\sigma_j\) are traceless, the \(I\)-component does not contribute, while orthonormality of the basis implies that the \(F\)-component selects exactly the matching indices. Consequently,
\begin{equation}
M_{pq,ij}
=
\delta_{pi}\delta_{qj}.
\label{eq:Mpqij_delta}
\end{equation}
Substituting Eq.~\eqref{eq:Mpqij_delta} into Eq.~\eqref{eq:Kpq_expand2}, we obtain
\begin{equation}
K_{pq}
=
\sigma_p^T\otimes \sigma_q .
\label{eq:Kpq_final}
\end{equation}
Therefore, every tensor-product basis element \(\sigma_p^T\otimes \sigma_q\) with \(p,q\ge 1\) lies in the span of unitary Choi operators. Together with the identity direction \(I\otimes I\), this shows that the span contains exactly $1+(d^2-1)^2$
linearly independent directions. We therefore conclude that
\begin{equation}
\dim \mathrm{span}\{[U]: U \text{ unitary}\}
=
(d^2-1)^2+1.
\label{eq:final_span_dimension}
\end{equation}
This proves the claim.

\section{IC set for arbitrary labs}

In this section, we provide a detailed proof of Theorem 2 from the main text.

\textbf{Theorem 2} \emph{A single qubit ancilla is sufficient to implement an informationally complete set of operations for an arbitrary number of probing laboratories, each acting on a \(d\)-dimensional input-output system, using only joint system--ancilla unitaries at the laboratories and a single final projective measurement on the ancilla.}

We consider \(N\) probing laboratories interacting sequentially with a common single-qubit ancilla. The corresponding superinstrument is constructed via the link product of the ancilla initial state, the Choi operators of the joint system--ancilla unitaries, and a final ancilla measurement. Explicitly,
\begin{equation}
\mathcal{T}_{\psi,m}
=
\proj{\psi}
\star
[U^{(1)}]
\star
[U^{(2)}]
\star \cdots \star
[U^{(N)}]
\star
\proj{m},
\label{eq:arb_labs_superinstrument}
\end{equation}
where \(\proj{\psi}\) is the initial ancilla state, \(\proj{m}\) labels the final projective measurement outcome, and $[U^{(n)}]$ is the Choi operator of the joint unitary \(U^{(n)}\) acting on the system and ancilla at laboratory \(n\).

We first use the following lemma.

\begin{lemma}
Each joint unitary \(U^{(n)}\) can be written, with respect to the computational basis \(\{\ket{0},\ket{1}\}\) of the ancilla, in the block form
\begin{equation}
U^{(n)}
=
\begin{pmatrix}
K_{00}^{(n)} & \sqrt{I-K_{00}^{(n)}K_{00}^{(n)\dagger}}\,V \\
W\sqrt{I-K_{00}^{(n)\dagger}K_{00}^{(n)}} & -W K_{00}^{(n)\dagger}V
\end{pmatrix},
\label{eq:block_unitary}
\end{equation}
where
\begin{equation}
K_{00}^{(n)} := (I\otimes\bra{0})U^{(n)}(I\otimes\ket{0}) \in \mathcal{L}(\mathcal{H}_d),
\end{equation}
and \(V\) and \(W\) are unitary operators on \(\mathcal{H}_d\).
\end{lemma}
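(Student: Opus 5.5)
The plan is to read off the block structure directly from unitarity of $U^{(n)}$, writing
\[
U^{(n)}=\begin{pmatrix} K & C\\ B & D\end{pmatrix},\qquad K=K^{(n)}_{00},
\]
with all blocks in $\mathcal L(\mathcal H_d)$. I then identify each of $B$, $C$, $D$ in terms of $K$ and two unitaries. First, the condition $U^{\dagger}U=I$ on the first block column gives $K^\dagger K+B^\dagger B=I$. In particular $K$ is a contraction and $B^\dagger B=I-K^\dagger K$. Similarly, $UU^\dagger=I$ on the first block row gives $CC^\dagger=I-KK^\dagger$.

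Second, I invoke the polar decomposition in finite dimension. Since $|B|=\sqrt{B^\dagger B}=\sqrt{I-K^\dagger K}$, there is a unitary $W$ with $B=W\sqrt{I-K^\dagger K}$. Likewise $|C^\dagger|=\sqrt{I-KK^\dagger}$ gives a unitary $V$ with $C=\sqrt{I-KK^\dagger}\,V$. These unitaries are unique only on the supports of the respective square roots, and I will exploit this freedom later.

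Third, I determine $D$ from the off-diagonal orthogonality condition $K^\dagger C+B^\dagger D=0$. Using the intertwining identity $K^\dagger f(KK^\dagger)=f(K^\dagger K)K^\dagger$ (via the SVD $K=P\Sigma Q^\dagger$, applied with $f(x)=\sqrt{1-x}$), this becomes
\[
\sqrt{I-K^\dagger K}\,\bigl(W^\dagger D+K^\dagger V\bigr)=0 .
\]
Hence $D=-WK^\dagger V$ holds on the part of the range where $\sqrt{I-K^\dagger K}$ is invertible.

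\textbf{Main obstacle.} The degenerate subspace, where $K$ has singular value $1$, is the step I expect to need the most care. There the factor above has a kernel, $B$ vanishes on the corresponding right singular vectors, and $C$ vanishes on the corresponding left singular vectors, so $D$ is not yet pinned down. My approach is to work in the SVD basis of $K$, i.e.\ effectively the CS decomposition of $U^{(n)}$, and split $\mathcal H_d=\mathcal H_{<1}\oplus\mathcal H_{=1}$ according to whether the singular value is below or equal to $1$.

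On $\mathcal H_{=1}$, the block $K$ is itself a partial isometry between the corresponding singular subspaces. Unitarity of $U^{(n)}$ then forces $D$ to restrict to a unitary map between the complementary subspaces, namely $W\mathcal H_{=1}$ and the relevant subspace of $V$. Because $W$ (resp.\ $V$) is arbitrary on $\ker\sqrt{I-K^\dagger K}$ (resp.\ $\ker\sqrt{I-KK^\dagger}$), I would redefine $W$ on that kernel by $W:=-D\,V^\dagger K$, restricted suitably. I then need to check that this redefinition is still unitary and still leaves $B=W\sqrt{I-K^\dagger K}$ unchanged; both follow because $\sqrt{I-K^\dagger K}$ annihilates that subspace. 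With this choice, $D=-WK^\dagger V$ holds on all of $\mathcal H_d$, which yields Eq.~\eqref{eq:block_unitary}.

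Finally, I will record the converse as a remark, since the proof of Theorem~2 uses it constructively. For any contraction $K$ and any unitaries $V$, $W$, the matrix in Eq.~\eqref{eq:block_unitary} is unitary. This follows by direct multiplication together with the same intertwining identity, since the sign in the lower-right block makes the cross terms cancel.
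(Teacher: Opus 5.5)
Your proposal is correct and follows essentially the paper's route. Both arguments use the block unitarity relations, the polar decompositions $B=W\sqrt{I-K^\dagger K}$ and $C=\sqrt{I-KK^\dagger}\,V$, and the SVD intertwining identity, and then solve an off-diagonal orthogonality relation for the lower-right block $D$; you use the $U^\dagger U$ relation, the paper the equivalent $UU^\dagger$ one.

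The paper simply cancels the defect operator as though it were invertible, so your separate treatment of singular value $1$ fills a step it leaves open. One refinement: the redefined $W:=-DV^\dagger K$ on $\ker\sqrt{I-K^\dagger K}$ stays unitary not merely because that square root annihilates the kernel. It stays unitary because $D^\dagger D=V^\dagger KK^\dagger V$ and $DD^\dagger=WK^\dagger KW^\dagger$ force $D$ to map $V^\dagger\ker(I-KK^\dagger)$ isometrically onto $W\ker(I-K^\dagger K)$; equivalently, you can read this off the CS decomposition.
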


\begin{proof}
Consider
\begin{equation}
K_{mn}^{(n)} = (I\otimes\bra{m})U^{(n)}(I\otimes\ket{n}),
\qquad m,n\in\{0,1\}.
\end{equation}
Since \(U^{(n)}\) is unitary, its block operators satisfy
\begin{align}
K_{00}^{(n)\dagger}K_{00}^{(n)} + K_{10}^{(n)\dagger}K_{10}^{(n)} &= I, \nonumber\\
K_{00}^{(n)}K_{00}^{(n)\dagger} + K_{01}^{(n)}K_{01}^{(n)\dagger} &= I, \label{eq:block_constraints}\\
K_{00}^{(n)}K_{10}^{(n)\dagger} + K_{01}^{(n)}K_{11}^{(n)\dagger} &= 0. \nonumber
\end{align}
From the first two relations, one may write
\begin{equation}
K_{01}^{(n)} = D^{(n)}V,
\qquad
K_{10}^{(n)} = W\overline{D}^{(n)},
\label{eq:offdiag_blocks}
\end{equation}
where
\begin{equation}
D^{(n)}=\sqrt{I-K_{00}^{(n)}K_{00}^{(n)\dagger}},
\qquad
\overline{D}^{(n)}=\sqrt{I-K_{00}^{(n)\dagger}K_{00}^{(n)}},
\end{equation}
and \(V,W\) are unitaries on \(\mathcal{H}_d\). Using the singular-value decomposition of \(K_{00}^{(n)}\), one verifies that
\begin{equation}
D^{(n)}K_{00}^{(n)\dagger}=K_{00}^{(n)\dagger}\overline{D}^{(n)}.
\label{eq:D_commutation}
\end{equation}
Substituting Eq.~\eqref{eq:offdiag_blocks} into the third relation of Eq.~\eqref{eq:block_constraints} then gives
\begin{equation}
K_{11}^{(n)}V^\dagger \overline{D}^{(n)} + W K_{00}^{(n)\dagger}\overline{D}^{(n)}=0,
\end{equation}
and hence
\begin{equation}
K_{11}^{(n)}=-W K_{00}^{(n)\dagger}V.
\label{eq:K11_form}
\end{equation}
This proves Eq.~\eqref{eq:block_unitary}.
\end{proof}

Without loss of generality, we choose the ancilla initial state and final measurement outcome to be \(\ket{\psi}=\ket{m}=\ket{0}\) since we can absorb the state changes into the corresponding unitaries. Then Eq.~\eqref{eq:arb_labs_superinstrument} becomes
\begin{equation}
\mathcal{T}_{0,0}
=
\sum_{\vec{\alpha},\vec{\beta}}
\ketbra{K_{\alpha_1 0}^{(1)}}{K_{\beta_1 0}^{(1)}}
\otimes
\ketbra{K_{\alpha_2 \alpha_1}^{(2)}}{K_{\beta_2 \beta_1}^{(2)}}
\otimes \cdots \otimes
\ketbra{K_{0\alpha_{N-1}}^{(N)}}{K_{0\beta_{N-1}}^{(N)}},
\label{eq:T00_expanded}
\end{equation}
where \(\vec{\alpha}=(\alpha_1,\dots,\alpha_{N-1})\) and \(\vec{\beta}=(\beta_1,\dots,\beta_{N-1})\), with each component taking values in \(\{0,1\}\). Next, for each \(n\in\{1,\dots,N-1\}\), define a modified unitary
\begin{equation}
\overline{U}^{(n)}=(I\otimes R_{\theta_n})U^{(n)},
\qquad
R_{\theta_n}=\proj{0}+e^{i\theta_n}\proj{1},
\label{eq:phase_modified_unitary}
\end{equation}
which induces a family of superinstruments \(\mathcal{T}_{0,0}(\vec{\theta})\), with \(\vec{\theta}=(\theta_1,\dots,\theta_{N-1})\). For each \(n\), define

\begin{align}
f_{\theta_n}\bigl(\mathcal{T}_{0,0}(\vec{\theta})\bigr)
&=
\frac{1}{4}\Big(
\mathcal{T}_{0,0}(\theta_n=0)
-
\mathcal{T}_{0,0}(\theta_n=\pi)
\nonumber\\
&\hspace{1cm}
-
i\,\mathcal{T}_{0,0}(\theta_n=\pi/2)
+
i\,\mathcal{T}_{0,0}(\theta_n=-\pi/2)
\Big).
\label{eq:phase_filter}
\end{align}
Applying these maps successively isolates the term
\begin{align}
&f_{\theta_{N-1}}\Bigl(
    f_{\theta_{N-2}}\bigl(
        \cdots f_{\theta_1}(\mathcal{T}_{0,0}(\vec{\theta}))
    \cdots
    \bigr)
\Bigr)
\nonumber\\
&\qquad =
\ketbra{K_{10}^{(1)}}{K_{00}^{(1)}}
\otimes
\ketbra{K_{11}^{(2)}}{K_{00}^{(2)}}
\otimes \cdots
\nonumber\\
&\qquad\quad \otimes
\ketbra{K_{11}^{(N-1)}}{K_{00}^{(N-1)}}
\otimes
\ketbra{K_{01}^{(N)}}{K_{00}^{(N)}}.
\label{eq:isolate_tensor_term}
\end{align}
Thus, it suffices to show that each factor in Eq.~\eqref{eq:isolate_tensor_term} can span an arbitrary basis element of \(\mathcal{L}(\mathcal{H}_d\otimes\mathcal{H}_d)\). To this end, let \(\{\sigma_\mu\}_{\mu=0}^{d^2-1}\) be a generalized Pauli (Weyl--Heisenberg) operator basis for \(\mathcal{L}(\mathcal{H}_d)\), normalized so that
\begin{equation}
\Tr[\sigma_\mu^\dagger \sigma_\nu]=d\,\delta_{\mu\nu}.
\label{eq:weyl_orthogonality}
\end{equation}
The vectorized operators \(\left\{\ket{\sigma_\mu}\right\}\) therefore form an orthogonal basis of \(\mathcal{H}_d\otimes\mathcal{H}_d\), since
\begin{equation}
\braket{\sigma_\mu}{\sigma_\nu}
=
\Tr[\sigma_\mu^\dagger \sigma_\nu].
\label{eq:vec_basis_identity}
\end{equation}
Hence the operators $\ketbra{\sigma_\mu}{\sigma_\nu}$
form a basis of \(\mathcal{L}(\mathcal{H}_d\otimes\mathcal{H}_d)\). For the first laboratory, choose
\begin{equation}
K_{00}^{(1)}=\frac{\sigma_\nu}{\sqrt{2}},
\qquad
W=\sigma_\mu .
\label{eq:first_lab_choice}
\end{equation}
Then
\begin{equation}
\overline{D}^{(1)}=\sqrt{I-K_{00}^{(1)\dagger}K_{00}^{(1)}}=\frac{I}{\sqrt{2}},
\end{equation}
so that
\begin{equation}
K_{10}^{(1)}=W\overline{D}^{(1)}=\frac{\sigma_\mu}{\sqrt{2}}.
\label{eq:first_lab_K10}
\end{equation}
Therefore,
\begin{equation}
\ketbra{K_{10}^{(1)}}{K_{00}^{(1)}}
=
\frac{1}{2}\ketbra{\sigma_\mu}{\sigma_\nu},
\label{eq:first_lab_basis}
\end{equation}
and by varying \(\mu,\nu\in\{0,\dots,d^2-1\}\), one obtains all basis elements. A similar argument applies to the final laboratory, where the relevant factor is $\ketbra{K_{01}^{(N)}}{K_{00}^{(N)}}$.
For each intermediate laboratory \(n\in\{2,\dots,N-1\}\), choose
\begin{equation}
K_{00}^{(n)}=\frac{\sigma_\nu}{\sqrt{2}},
\qquad
W=\sigma_\mu,
\qquad
V=I.
\label{eq:middle_lab_choice}
\end{equation}
Using Eq.~\eqref{eq:K11_form}, and the unitarity of \(\sigma_\nu\), we get
\begin{equation}
K_{11}^{(n)}=-W K_{00}^{(n)\dagger}V
=
-\frac{\sigma_\mu \sigma_\nu^\dagger}{\sqrt{2}}.
\label{eq:middle_lab_K11}
\end{equation}
Since products of Weyl operators are again Weyl operators up to a phase, one has
\begin{equation}
\sigma_\mu \sigma_\nu^\dagger = e^{i\phi_{\mu\nu}}\sigma_\lambda
\end{equation}
for some \(\lambda\). Hence
\begin{equation}
\ketbra{K_{11}^{(n)}}{K_{00}^{(n)}}
\propto
\ketbra{\sigma_\lambda}{\sigma_\nu},
\label{eq:middle_lab_basis}
\end{equation}
and again all basis elements are obtained by varying the Weyl labels. Therefore, each tensor factor in Eq.~\eqref{eq:isolate_tensor_term} can be chosen arbitrarily from a basis of \(\mathcal{L}(\mathcal{H}_d\otimes\mathcal{H}_d)\). Their tensor products therefore span the full operator space of the \(N\)-laboratory process. Consequently, the linear span of the obtainable superinstrument elements is informationally complete. This proves the theorem.
